\documentclass[%
 reprint,
 amsmath,amssymb,
 aps,bm,
showpacs]{revtex4}
\usepackage{cleveref}
\usepackage[utf8]{inputenc}
\usepackage{amsmath, mathtools, tensor}
\usepackage{graphicx}
\usepackage{color}
\usepackage{dcolumn}
\usepackage{bm}
\usepackage{comment}
\usepackage{esint}
\usepackage{float}

\usepackage{graphicx} 
\usepackage{support-caption}

\usepackage[compatibility=false]{caption}

\usepackage{subcaption}

\usepackage{changepage}
\usepackage{cool}
\usepackage{xcolor}

\newcommand{\closedint}[2]{{\ensuremath{\left [ #1, #2 \right ]}}}
\newcommand{\paren}[1]{{\ensuremath{\left ( #1 \right )}}}
\newcommand{\size}[1]{{\ensuremath{\left | #1 \right |}}}

\newcommand{\map}[2]{{\ensuremath{#1 \left ( #2 \right )}}}
\newcommand{\set}[1]{{\ensuremath{\left \{ #1 \right \}}}}

\newcommand{\sgn}{\text{sgn}}

\newcommand{\dd}{\ensuremath{\textrm{d}}}

\newcommand{\UD}[2]{\ensuremath{^{#1}_{\phantom{#1} #2}}}

\newcommand{\DU}[2]{\ensuremath{_{#1}^{\phantom{#1} #2}}}

\newcommand{\UDDD}[4]{\ensuremath{^{#1}_{\phantom{#1} #2 #3 #4}}}

\newcommand{\RR}{\ensuremath{{\mathbb{R}}}}

\newcommand{\beq}{\begin{equation}}
\newcommand{\eeq}{\end{equation}}
\newcommand{\bea}{\begin{eqnarray}}
\newcommand{\eea}{\end{eqnarray}}
\newcommand{\bean}{\begin{eqnarray*}}
\newcommand{\eean}{\end{eqnarray*}}
\newcommand{\bit}{\begin{itemize}}
\newcommand{\eit}{\end{itemize}}
\newcommand{\bfi}{\begin{figure}}
\newcommand{\efi}{\end{figure}}
\newcommand{\bfic}{\begin{figure*}}
\newcommand{\efic}{\end{figure*}}
\newcommand{\bce}{\begin{center}}
\newcommand{\ece}{\end{center}}
\newcommand{\bt}{\begin{table}}
\newcommand{\et}{\end{table}}
\newcommand{\btb}{\begin{tabular}}
\newcommand{\etb}{\end{tabular}}

\newcommand{\mba}{\mathbf A}
\newcommand{\mbb}{\mathbf B}
\newcommand{\mbc}{\mathbf C}

\newcommand{\calA}{\ensuremath{\mathcal{A}}}
\newcommand{\calP}{\ensuremath{\mathcal{P}}}

\newcommand{\calE}{\ensuremath{\mathcal{E}}}
\newcommand{\calO}{\ensuremath{\mathcal{O}}}

\newcommand{\WXL}{\ensuremath{W_{XL}}}
\newcommand{\WXX}{\ensuremath{W_{XX}}}
\newcommand{\WLX}{\ensuremath{W_{LX}}}
\newcommand{\WLL}{\ensuremath{W_{LL}}}
\newcommand{\tWXL}{\ensuremath{\tilde W_{XL}}}

\newtheorem{theorem}{Theorem}[section]

\newtheorem{proposition}[theorem]{Proposition}
\newtheorem{corollary}[theorem]{Corollary}
\newtheorem{ddefinition}[theorem]{Definition}

\newenvironment{proof}[1][Proof]{\begin{trivlist}
\item[\hskip \labelsep {\bfseries #1}]}{\end{trivlist}}
\newenvironment{definition}[1][Definition]{\begin{trivlist}
\item[\hskip \labelsep {\bfseries #1}]}{\end{trivlist}}

\newcommand{\qed}{\nobreak \ifvmode \relax \else
      \ifdim\lastskip<1.5em \hskip-\lastskip
      \hskip1.5em plus0em minus0.5em \fi \nobreak
      \vrule height0.75em width0.5em depth0.25em\fi}

\begin{document}
\preprint{APS/123-QED}

\title{Testing the null energy condition with precise distance measurements}

\author{Miko\l{}aj Korzy\'n{}ski }
 \email{korzynski@cft.edu.pl}
 \author{Julius Serbenta}
 \email{julius@cft.edu.pl}
\affiliation{%
 Center for Theoretical Physics, Polish Academy of Sciences,
Al. Lotnik\'ow 32/46, 02-668 Warsaw, Poland
}%

\begin{abstract}
We present an inequality between two types of distance measures from an observer to a single light source in general relativity. It states that for a given source and observer the distance  measured by the trigonometric parallax is never shorter than the angular diameter distance provided that the null energy condition holds and that there are no focal points in between. This result is independent of the details of the spacetime geometry or the motions of the observer and the source. The proof is based on the geodesic bilocal operator formalism and on well-known properties of infinitesimal light ray bundles. Observation of the violation of the distance inequality would mean that on large scales either the null energy condition does not hold or that light does not travel along null geodesics.

\end{abstract}

\pacs{42.15.-i; 98.80.-k, 98.80.Jk; 95.30.Sf, 95.10.Jk; 97.10.Vm, 97.10.Wn}

\maketitle

\section{Introduction}

Measuring the distance to a given light source is one of the fundamental problems of astronomy. Many methods have been developed for that purpose depending on the nature of the source and the distance.  The three probably most straightforward ones are the trigonometric parallax measurement, the determination of the angular size of an extended source of known size (i.e. a standard ruler) and the measurement of the energy flux from an isotropic source with known absolute luminosity (i.e. a standard candle). By definition, all three methods must give the same result in a flat spacetime and in the absence of relative motions between the source and the observer. However, in the presence of spacetime curvature and relative motions the three methods are  inequivalent. Therefore, in general relativity (GR) we distinguish the angular diameter distance $D_{ang}$, also known as the area distance, defined via the solid angle occupied by the image of a standard ruler, the luminosity distance $D_{lum}$, defined by the measured energy flux from a standard candle, and the parallax distance $D_{par}$, defined via the apparent displacement of the image given the displacement of the observer along a baseline \cite{perlick2004r, grasso2019}.
Note that in the presence of curvature the trigonometric parallax may depend on the baseline orientation. However, it is possible to define a baseline-averaged quantity, combining the parallax effects from two orthogonal directions \cite{grasso2019}. 

 $D_{ang}$, $D_{lum}$ and the redshift $z$ measured between  a fixed source and observer  are related by the well-known Etherington's reciprocity relation $D_{lum}= (1+z)^2\,D_{ang}$ independently of the spacetime geometry \cite{etherington1933, etherington2007, penrose1966, kristian1966, ellis1971, ellis2009rep, schneider1992, perlick2004r, uzun2020}. In case of the baseline-averaged parallax distance $D_{par}$ the relation to other distance measures is more complicated and depends on the curvature tensor along the line of sight. In fact, for short distances the relative difference between $D_{ang}$ and $D_{par}$, called the \emph{distance slip} $\mu$, depends only on the matter content along the line of sight: the leading order correction is given by an integral of the component $T_{\mu\nu}\,l^\mu\,l^\nu$ of the stress-energy tensor, where $l^\mu$ is the tangent vector to the null geodesic connecting the observer and the source. Namely, for short distances we have
\bea
\mu = 1 - \frac{D_{ang}^2}{D_{par}^2} = 8\pi G\,\int_{\lambda_\calO}^{\lambda_\calE} T_{\mu\nu}\,l^\mu\,l^\mu\,(\lambda_{\calE}-\lambda)\,\dd\lambda + O\left(\textrm{Riem}^2\right), \label{eq:relation}
\eea
where $\lambda$ is the affine parameter of the connecting null geodesic, $\lambda_\calO$ corresponds to the observation point and $\lambda_{\calE}$ to the source, while $O\left(\textrm{Riem}^2\right)$ denotes terms involving higher powers of the curvature \cite{grasso2019, korzynski2020}.

In this paper we show that the sign of the difference between $D_{par}$ and $D_{ang}$ for a given source and a given observer is directly related to the \emph{null energy condition} (NEC). On the perturbative level this already follows from \eqref{eq:relation}: the leading, linear term in curvature is obviously nonnegative if the NEC condition holds, because in this case we have $0 \le R_{\mu\nu}\,l^\mu\,l^\nu = 8\pi G\,T_{\mu\nu}\,l^\mu\,l^\nu$ for any null $l^\mu$. The main theorems of this paper extend this inequality to the  non-perturbative level:  we show that if the NEC is satisfied then $D_{par} \ge D_{ang}$  at least up to a well-defined, finite distance between the observer and the light source. More precisely, for a fixed observation point and variable source position along a null geodesic the inequality is guarateed to hold from the observation point up to the so-called focal point, where the parallax distance reaches infinite value. Note that this blow-up may  happen for a finite value of the affine parameter. Past that point, however, $D_{par}$ becomes finite again and it is possible that $D_{par} < D_{ang}$ there even if the NEC holds globally \cite{serbenta2021}. This restriction means that the distance inequality is not global. 
Stated in a more physical language, the inequality  means that both the matter along line of sight, resulting in Ricci focusing of the null geodesics, and the tidal forces, producing their shear, can only increase $D_{par}$ in comparison with $D_{ang}$, at least up to the first focal point.

 Recall that the \emph{null energy condition} (NEC) for all null vectors $l^\mu$ reads
\bea
 T_{\mu\nu}\,l^\mu\,l^\nu \ge 0.
\eea
It is one of the weakest classical pointlike energy conditions with relatively simple and important applications. In general relativity, it is equivalent to the null convergence condition (NCC)
\bea
 R_{\mu\nu} l^\mu l^\nu \ge 0,
\eea 
which ensures that at every point light rays experience gravity as an attractive force. This property implies that any light ray bundle eventually has to reach a caustic over a finite distance provided that suitable initial condition holds. More precisely, the standard focusing theorem states that the ray bundle must reach a point where its expansion diverges provided that the expansion is negative at a single point, and NCC holds \cite{senovilla1998}. Surprisingly, this simple result is necessary for proving such basic black hole properties like formation of event horizons and singularities under the gravitational collapse, some of the black hole laws, various no hair theorems, and some versions of the positivity of ADM mass \cite{curiel2017}.

As we have already noted, if we assume the Einstein equations then the NEC is directly related to the NCC. It is satisfied for most reasonable types of matter like dust, radiation, fluid, or classical electromagnetic fields \cite{maeda2020}. It is also insensitive to the presence of cosmological constant.  The validity of the NEC can also be used to put bounds on various properties of the FLRW Universe \cite{cattoen2008}. On the other hand, it has also been noted that quantum effects tend to violate the NEC as well as its even weaker averaged version \cite{barcelo2002}. 
The NEC can also fail in the presence of non-standard matter fields \cite{barcelo2002, rubakov2014, curiel2017}, including fluids with barotropic index $w < -1$ or for holographic dark energy models with even smaller barotropic index \cite{colgain2021critique}. 

In modified theories of gravity NEC and NCC are in general not equivalent. The NCC can fail even in presence of standard matter if the field equations contain additional terms, like in the case of $f(R)$ gravity \cite{santos2007, sotiriou2010} or other extended metric theories \cite{baccetti2012, capozziello2014, capozziello2015}. Moreover, light propagation may also work differently than in GR: outside the Riemannian geometry regime the light may follow null curves which are neither autoparallel nor extremal, and this implies that the optical equations contain additional geometric terms, effectively acting as additional, nonclassical matter fields \cite{bolejko2020,speziale2018,santana2017}.  These terms affect the focusing and defocusing properties of the spacetime and, consequently, the relation between NEC and the properties of light rays.

In this paper we assume that standard general relativity holds. Therefore, we will assume NEC, which we in turn treat as equivalent to NCC. However, the reasoning should still apply to any metric theory in which light travels along null geodesics and NCC holds.

 The main result of this paper, i.e. the distance inequality, provides a method to test the NEC using sufficiently precise, simultaneous distance measurements by parallax and angular (or luminosity) to a number of light sources. The observation of the violation of the NEC would require a serious reevaluation of the fundamentals of physics. We point out, however, that the precision required seems beyond what is currently possible.

The proof of the main theorem of the paper, that is, Theorem \ref{thm:mu}, makes use of the bi-local approach to light propagation in curved spacetimes, developed  in \cite{grasso2019, serbenta2021}, and of the standard infinitesimal ray bundle formalism \cite{perlick2004r}, closely related to the null congruence formalism. 
We first show that the distance slip $\mu$ between the observation point $\calO$ and any  source located along a past-directed null geodesic from $\calO$ is related to the ratio of the cross-sectional areas  of a particular infinitesimal bundle of null rays,  with cross sections taken at  $\calO$ and at the source. We then show that this cross-sectional  area cannot increase along the null geodesic 
as we move away  from the observation point  in the past direction, at least up to the first focal point, which completes the proof.  The main argument is  similar to the reasoning used in the proof of the standard focusing theorem.

\paragraph{Notation and conventions.}
Greek letters $\paren{\alpha,\beta,\dots}$ run from $0$ to $3$, lowercase Latin indices run from $1$ to $3$ and uppercase Latin indices run from $1$ to $2$. They all enumerate tensor components in the coordinate tetrad. Boldface versions of indices cover the same range but denote components in a parallel transported tetrad. The dot stands for the derivative with respect to the affine parameter along the null geodesic. Subscript $\mathcal O$ and $\mathcal E$ denote evaluation of the quantity at respectively the point of observation and emission, i.e. $f_{\mathcal {O}} \equiv \map f {\lambda_{\mathcal O}}$. We assume the speed of light $c=1$ throughout the paper.

\paragraph{Structure of the paper.} 
In Sec. II, we briefly review the bilocal geodesic operator (BGO) formalism and relate these operators to the magnification and parallax matrices as well as angular diameter and parallax distances. Then we introduce the notion of the inifinitesimal ray bundle and present two types of bundles which we will use later. Finally, we recall Sachs optical equations and their connection to the BGOs. In Sec. III we present our main results. In the first part we prove the inequality for the parallax distance $D_{par}$ with baseline averaging performed using the determinant of the parallax matrix. In the second part we show that a similar conclusion follows for the parallax distance with the baseline averaging via trace, as proposed earlier by a number of authors. Lastly, we explain why our result cannot be extended to single baseline parallax measurements. We gather our final remarks in Sec. IV, including a short discussion of the prospects for  measurement.

\section{Preliminaries}

Let $\paren {M, g}$ be a Lorentzian spacetime, with signature $(-,+,+,+)$. Let $\calO, \calE \in M$ be points contained in a geodesically convex set. $\calO$ will denote the observation point, lying in the causal future of the emission point $\calE$. Let $\gamma_0 : \closedint {\lambda_
\calO} {\lambda_\calE} \to M$ be the unique geodesic connecting $\calO$ and $\calE$, and we assume here that $\gamma_0$ is null. By convention we assume that the affine parameter $\lambda$ runs backwards in time, i.e. $\lambda_\calO < \lambda_\calE$. Let $N_\calO, N_\calE \subset M$ be locally flat neighbourhoods of $\calO$ and $\calE$ respectively extending in all four dimensions. Let $l^\mu$ be the vector tangent to $\gamma_0$. Let $\xi^\mu$ denote the deviation vector (Jacobi field) along $\gamma_0$, satisfying the first order \emph{geodesic deviation equation} (GDE):
\bea
\nabla_l \nabla_l \xi^\mu = R \UD{\mu}{ll\nu}\,\xi^\nu,
\label{eq:gde}
\eea
where the \emph{optical tidal tensor} $ R \UD{\mu}{ll\nu}$ is defined as $ R \UD{\mu}{ll\nu} \equiv R\UD{\mu}{\alpha\beta\nu}\,l^\alpha\,l^\beta$. This is a linear, second order ordinary differential equation for $\xi^\mu$. It is possible to rewrite it  as an equation for four bitensors, forming together the formal resolvent of the GDE \cite{grasso2019}. In this language the general solution to \eqref{eq:gde} can be expressed as
\bea
\begin{split}
\xi^\mu(\lambda) &= {\WXX} \UD \mu \nu(\lambda)\, \xi^\nu( \lambda_\calO) + {\WXL} \UD \mu \nu(\lambda) \,{\nabla_l \xi^\nu}(\lambda_\calO)\\
\nabla_l \xi^\mu(\lambda) &= {\WLX} \UD \mu \nu (\lambda)\,\xi^\nu(\lambda_\calO) + {\WLL} \UD \mu \nu (\lambda)\,\nabla_l \xi^\nu (\lambda_\calO),
\end{split}
\eea
where $\WXX$, $\WXL$, $\WLX$ and $\WLL$ are 4 bitensors, or two-point tensors, acting from the observation point $\calO$ to the point $\lambda$ on the geodesic, called the \emph{bilocal geodesic operators} (BGO's). The BGO's are functionals of the curvature along $\gamma_0$ defined via appropriate ordinary differential equations involving the components of the optical tidal tensor as coefficients \cite{korzynski2021, grasso2019}.

We introduce the \emph{seminull tetrad} (SNT)  of the form $(u^\mu, e_{\mba}^\mu, l^\mu)$, parallel propagated along $\gamma_0$. It comprises the null tangent vector $l^\mu$, two orthonormal spacelike vectors $e^\mu_\mba$, ${\mba} = 1,2$,  orthogonal to $l^\mu$, spanning a spacelike two-dimensional subspace called the screen space or transverse space, and a normalized timelike vector $u^\mu$ orthogonal to $e^\mu_\mba$. $u^\mu$ is commonly identified with the 4-velocity of an observer measuring the position and the direction of the propagation of photons. It satisfies $u^\mu\, l_\mu = Q$, where $Q$ is a positive constant. Given the fixed observation point $\calO$, the equations for the transverse components of $W_{XX}$ and $W_{XL}$  expressed in the SNF read \cite{grasso2019}:
\bea
\begin{split}
{{\ddot W}{}_{XX}}\UD{\mba}{\mbb} - R\UD{\mba}{ll\mbc}\,{W_{XX}}\UD{\mbc}{\mbb} &= 0 \\
{W_{XX}}\UD{\mba}{\mbb}(\lambda_{\calO}) &= \delta\UD{\mba}{\mbb} \\
{\dot W_{XX}}{}\UD \mba \mbb(\lambda_\calO) &= 0
\end{split}
\label{eq:WXXdef}
\eea
and
\bea
\begin{split}
{{\ddot W}{}_{XL}}\UD{\mba}{\mbb} - R\UD{\mba}{ll\mbc}\,{W_{XL}}\UD{\mbc}{\mbb} &= 0 \\
{W_{XL}}\UD{\mba}{\mbb}(\lambda_{\calO}) &=  0 \\
{\dot W_{XL}}{}\UD \mba \mbb(\lambda_\calO) &= \delta\UD{\mba}{\mbb}.
\end{split}
\label{eq:WXLdef}
\eea

With this machinery we may introduce the parallax matrix, the magnification matrix and the optical distance measures. Suppose we project the BGOs onto the SNT, with the timelike vector corresponding to the observer's 4-velocity $u_\calO^\mu$. Their projection onto the screen space spanned by $e^\mu_\mba$ is related to the matrices. 

The magnification matrix $M\UD{\mba}{\mbb}$ relates the transverse vectors representing the spatial extent of a luminous body to the vectors on the screen space representing the angular size on an observer's sky \cite{grasso2019} (in the gravitational lensing theory this quantity is usually defined using angular variables, and therefore rescaled with respect to $M\UD{\mba}{\mbb}$ as defined here). We have
\bea
M \UD \mba \mbb = \frac 1 {\paren{u_\calO^\mu\, l_{\calO\, \mu}} } \left({\WXL}^{-1}\right) \UD \mba \mbb, \label{eq:Mdef}
\eea
see \cite{grasso2019}. The submatrix ${\WXL}\UD\mba\mbb$, whose inverse appears in the formula above, is also known as the Jacobi matrix.

The parallax matrix on the other hand relates the transverse displacement of an observer with the apparent change of the position of a body on the observer's sky \cite{grasso2019} and is given by
\bea
\Pi \UD \mba \mbb = \frac 1 {\paren{u^\mu_\calO\, l_{\calO\,\mu}}} \left({\WXL}^{-1}\right) \UD \mba \mbc\, {\WXX} \UD \mbc \mbb.
\eea
In a flat spacetime both linear operators are proportional to a unit matrix. Therefore both the trigonometric parallax and the magnification do not depend on the direction of the baseline or the orientation of the source's shape. However, in the general case the direction in the transverse space is important. Both linear operators can be used to define direction-averaged distances to the observed source of light. In the BGO formalism it is natural to do this as follows: the angular diameter distance, or the area distance, is defined via the determinant of $M \UD \mba \mbb$:
\bea
D_{ang} &=& \size{\det M \UD \mba \mbb}^{-1/2} = \paren {u^\mu_\calO\, l_{\calO \mu}}\, \size{\det {\WXL} \UD \mba \mbb}^{1/2}\label{eq:distDang}.
\eea
In a more operational language, it is given by the ratio of the cross-sectional area of a luminous object and the solid angle taken by its image. It is a well-known quantity in relativistic geometrical optics.
The parallax distance averaged by determinant has been introduced in \cite{grasso2019}:
\bea
D_{par} &=& \size{\det \Pi \UD \mba \mbb}^{-1/2} = \paren {u^\mu_\calO\, l_{\calO \mu}} \,\size{\det {\WXL} \UD \mba \mbb}^{1/2} \,\size{\det {\WXX} \UD \mba \mbb}^{-1/2}.
\label{eq:dists}
\eea
Both distance measures depend on the state of motion of the observer, given by $u_\calO^\mu$, and the spacetime geometry along the line of sight. However, they do not depend on the state of motion of the light emitter.

Finally, we recall the definition of the main quantity of this work, i.e the \emph{distance slip} $\mu$ \cite{grasso2019}:
\bea
\mu = 1 - \sigma \frac{D_{ang}^2}{D_{par}^2} = 1 - \frac{\det \Pi\UD{\mba}{\mbb}}{\det M\UD{\mba}{\mbb}},\label{eq:muD}
\eea
where $\sigma = \pm 1$ is a sign correction given by $\sigma = \sgn \det \Pi\UD{\bm A}{\bm B} / \sgn \det M \UD{\bm A}{\bm B}$.
It is independent of the states of motion of both the observer and the source. It vanishes in a flat spacetime, but not necessary in a curved one.

From \eqref{eq:Mdef}-\eqref{eq:dists} we can rewrite the equations above as
\bea
\mu = 1 - \det {W_{XX}}\UD{\mba}{\mbb} \label{eq:muWXX}
\eea
and
\bea
 \sigma = \sgn \det {W_{XX}}\UD{\mba}{\mbb}. \label{eq:sigmaWXX}
 \eea

\subsection{Infinitesimally thin bundles}

The infinitesimally thin ray bundle formalism is a complementary method to describe light propagation between two points. We make use of it in this paper, because the propagation equations describing the bundles 
can be easily used to prove inequalities involving observable quantities. We will briefly review the infinitesimally thin ray bundle formalism, as described in \cite{perlick2004r}. 

By an infinitesimal ray bundle with an elliptical cross section we mean the set

\bea
S = \set {c^I \xi^\mba_I ~|~ c^1,c^2 \in \RR, \, c^I c^J \delta_{I J} \le 1}
\eea
where $\xi^\mba$ satisfies the GDE in the SNT:
\bea
\ddot \xi^\mba_I &= R \UDDD \mba l l \mbb \, \xi^\mbb_I. 
\label{eq:gdee}
\eea
and the index $I$ enumerates linearly independent solutions not proportional to $l^\mu$. 
Note that we take into account only the two transverse components of the vectors. This is possible because   we may impose the condition $\xi^{\mathbf{ 0}} = 0$, or equivalently 
\bea
\xi^\mu\,l_\mu = 0, \label{eq:orthogonal}
\eea
along the whole $\gamma_0$,  and because the component $\xi^{\mathbf{ 3}}$  does not couple with the other three. $\xi^{\mathbf{3}}$ is also irrelevant from the point of view of geometric optics \cite{korzynski2018,grasso2019}. 
The cross-section of $S$ by the screen spanned by $e^\mu_\mba$ is spacelike and Lorentz-invariant everywhere along the geodesic. Especially important for the main result is its cross-sectional area:
\bea 
\calA = \pi\, \epsilon_{\mba \mbb} \,\xi^\mba_1\, \xi^\mbb_2,
\label{eq:ardef}
\eea 
where $\epsilon_{\mba\mbb}$ is the area two-form   \cite{sachs1961}  and $\xi^\mba_I$ are the two linearly independent solutions of \eqref{eq:gdee} projected on the screen space. 
The area defined by \eqref{eq:ardef} is a signed quantity which may change sign when the bundle degenerates to a point or to a line. We assume throughout the work that the initial value of $\calA$ is chosen to be positive, i.e.
\bea
\calA(\lambda_\calO) \equiv \calA_\calO > 0. \label{eq:Ainitialdata}
\eea

It is customary to rewrite the GDE \eqref{eq:gdee} for the two generators of an infinitesimal bundle in terms of the so-called kinematical bundle variables. We first need to decompose the transverse part of the optical tidal tensor into the trace and the traceless part as follows:
\bea
 R\UD{\mba }{\bm\mu\bm\nu\mbb }\,l^{\bm \mu}\,l^{\bm \nu} = -\frac{1}{2}\,R_{ll}\,\delta\UD{\mba }{\mbb } + C\UD{\mba }{\bm\mu\bm\nu\mbb}\,l^{\bm\mu}\,l^{\bm\nu}, \label{eq:decomposition}
\eea
where $R_{ll} = R_{\bm\mu\bm\nu}\,l^{\bm\mu}\,l^{\bm\nu}$ denotes the $l$-$l$ component of the Ricci tensor and $C\UD{\bm \alpha}{\bm\mu\bm\nu\bm \beta}$ is the Weyl tensor. This decomposition  holds even though we are taking the trace only with respect to the two-dimensional subspace of the tangent space, see for example \cite{grasso2019, schimd2005}. 

The bundle evolution along the null geodesic is most conveniently described in terms of the deformation tensor $B\UD{\mba}{\mbb}$ \cite{wald2010, poisson2004, galloway2004}, defined via
\bea
 \dot \xi_I^\mba(\lambda) = B\UD{\mba}{\mbb}(\lambda)\,\xi_I^\mbb(\lambda)
\eea
for both $I=1,2$. 
The infinitesimal bundle can always be extended to a full congruence of null geodesics. In that case we have the relation
\bea
 B\UD{\mba}{\mbb} = \nabla_{\mbb} l^{\mba},
 \eea
where $l^\mu$ denotes the vector field generating the congruence. 

$B\UD{\mba}{\mbb}$ decomposes into the kinematical variables (also known as the optical scalars), i.e. the scalar expansion $\theta$, traceless symmetric shear $\sigma_{\mba\mbb}$, and antisymmetric twist $\omega_{\mba\mbb}$, according to
\bea
B_{\mba\mbb} = \frac{1}{2}\theta\,\delta_{\mba\mbb} + \sigma_{\mba\mbb} + \omega_{\mba\mbb}.
\eea
Each of them satisfies an appropriate  ODE along the null geodesic, known as the null Raychaudhuri equations,  Sachs optical equations or transport equations.
Here we only consider  twist-free (or surface-forming) bundles, for which $\omega_{\mba \mbb} = 0$ along the whole $\gamma_0$. In   this case the equations for $\theta$ and shear $\sigma_{\mba \mbb}$ read \cite{poisson2004}
\bea
\dfrac{\dd \theta}{\dd \lambda} &=& - \frac{\theta^2}{2} - \sigma_{\mba \mbb} \sigma^{\mba \mbb} - R_{ll} 
\label{eq:raye} \\
\dfrac{\dd \sigma_{\mba \mbb}}{\dd \lambda} &=& - \theta \,\sigma_{\mba \mbb} + C_{\mba  ll \mbb},
\label{eq:rays}
\eea
where $C_{\mba ll \mbb} = C_{\mba\bm\mu\bm\nu\mbb}\,l^{\bm\mu}\,l^{\bm\nu}$. These equations are sometimes written in a different form, using a set of complex scalars, but we prefer the tensorial representation.
The area of the cross-section satisfies its own evolution equation, given in terms of the expansion:
\bea
\dfrac {\dd \calA} {\dd \lambda} = \calA\, \theta.
\label{eq:Aevolution}
\eea

In this paper we mainly consider two examples of infinitesimal ray bundles.  The first one is   the \emph{infinitesimal ray bundle parallel at $\calO$}, or \emph{the parallel bundle at $\calO$} for short, see Fig.~\ref{fig:initiallyparallel}. As the name suggests, it satisfies the condition of being strictly parallel at $\calO$, i.e.
\bea
 \theta(\lambda_\calO) &=& 0 \\
 \sigma_{\mba\mbb}(\lambda_\calO) &=& 0 \\
 \omega_{\mba\mbb}(\lambda_\calO) &=& 0 .
\eea

\begin{figure}[H]
\centering
\includegraphics[trim={1cm 9cm 2cm 5cm},clip,width=0.8\linewidth]{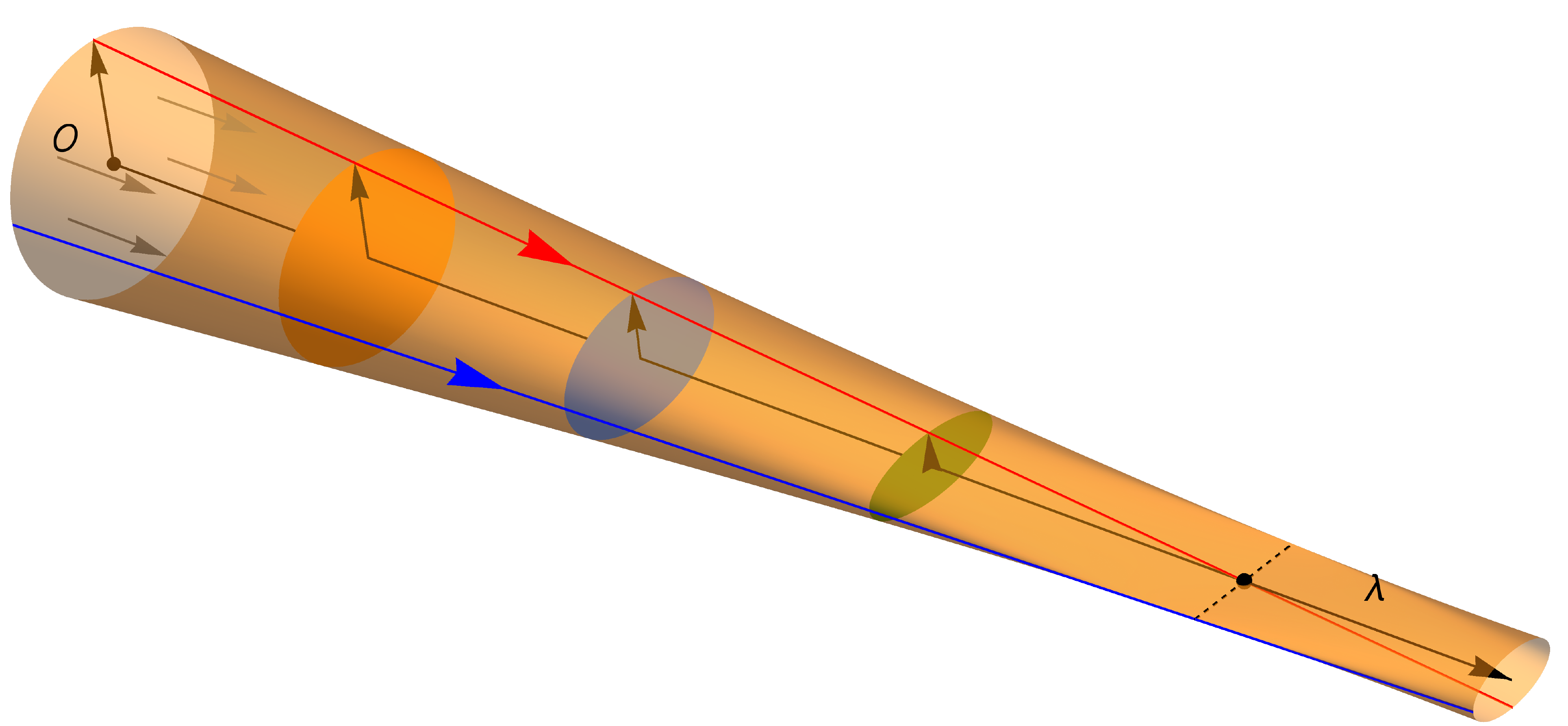}
\caption{
A bundle of rays that runs parallel at $\calO$. Its shape undergoes a deformation under the spacetime curvature. Along the fiducial geodesic a shape of cross section that is circular at $\calO$ changes its size and becomes increasingly elliptical. In the generic case it eventually degenerates either to a line or to a point. The point on the geodesic where this degeneracy happens is called the focal point.} \label{fig:initiallyparallel}
\end{figure}

It is related to the transverse components of  $\WXX$: namely, we have
\bea
\xi^{\mba}_I(\lambda) = {\WXX}\UD{\mba}{\mbb}(\lambda)\,\xi^{\mbb}_I(\lambda_\calO) \label{eq:xiWXX}
\eea
for this bundle. Due to the orthogonality condition (\ref{eq:orthogonal}) $\xi_I^\mu$ has only transverse components plus a component proportional to $l^\mu$. The latter is irrelevant
from the point of view of the geometry of cross-sections, see the Sachs shadow theorem \cite{sachs1961}, so it is the two transverse components of $\xi_I^\mu$ given by \eqref{eq:xiWXX} that define the 
distance measures.
Finally, Eqs. \eqref{eq:xiWXX} and \eqref{eq:ardef} give
\bea
\map \calA \lambda = \paren {\det {\WXX} \UD \mba \mbb} \,\calA_\calO. \label{eq:AfromWXX}
\eea
From this and \eqref{eq:muWXX} we see that
\bea
\mu = 1 - \frac{\map \calA \lambda}{\calA_\calO},
\label{eq:muar}
\eea
where $\mu$ is calculated for the emission point at $\lambda$ and the observation point at $\calO$.

The other bundle we consider in this paper is the \emph{infinitesimal ray bundle with a vertex at point $\calE$}, or simply the \emph{vertex bundle from $\calE$}. It satisfies the condition of vanishing at $\calE$, i.e. $\tilde\xi_I^{\mba}(\lambda_\calE) = 0$ for $I = 1,2$, see Fig.~\ref{fig:vertexatEa}.
Let $\tWXL{}\UD{\mba}{\mbb}(\lambda)$ denote the transverse components of the bitensor defined just like $\WXL{}\UD{\mba}{\mbb}(\lambda)$, but with the initial point taken at $\calE$ instead of $\calO$. The vertex bundle is then related to this bitensor via:
\bea
\tilde\xi_I^\mba(\lambda) = \tWXL{}\UD{\mba}{\mbb}(\lambda) \,\dot{\tilde\xi}_I^\mbb(\lambda_\calE).
\eea
The asymptotic behaviour of the infinitesimal bundle near $\calE$ is described by the Taylor expansion
\bea
\tWXL{}\UD{\mba}{\mbb}(\lambda) &=& (\lambda - \lambda_\calE)\,\delta\UD{\mba}{\mbb} + O\left(\left(\lambda - \lambda_\calE\right)^2\right), \label{eq:WXLasympt}
\eea
see equations \eqref{eq:WXLdef} with $\calE$ as the base point.
It follows that the expansion $\tilde \theta$ has a simple pole at $\lambda_{\calE}$ \cite{perlick2004r}:
\bea
\tilde \theta(\lambda) &=& 2(\lambda - \lambda_{\calE})^{-1} + O(1). \label{eq:thetaasympt}
\eea

\begin{figure}[H]
\begin{subfigure}[b]{1.\linewidth}
\centering
\includegraphics[trim={0cm 7cm 5cm 3cm},clip,width=0.6\linewidth]{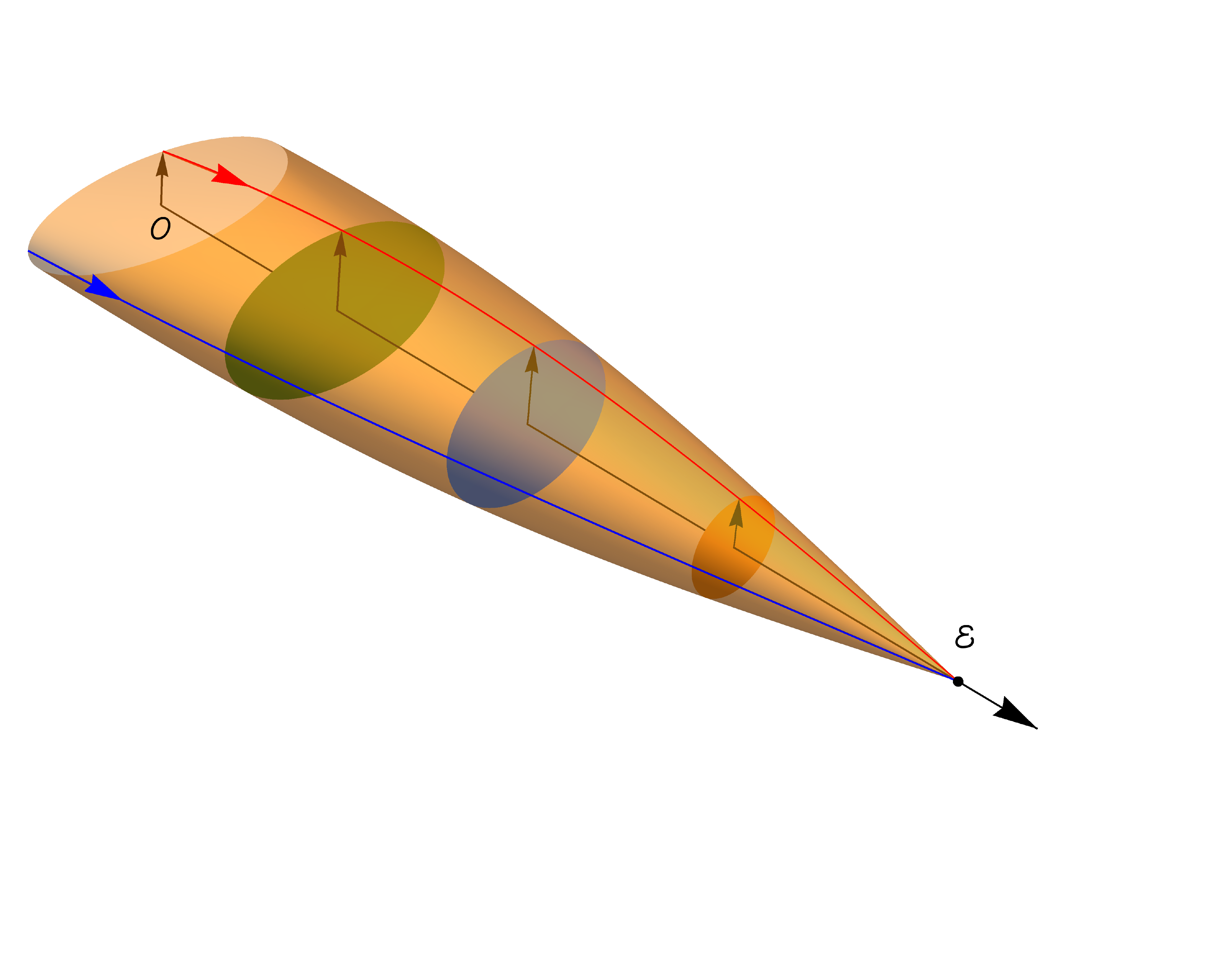}
\caption{}\label{fig:vertexatEa}
\end{subfigure}\\
\begin{subfigure}[b]{1.\linewidth}
\centering
\includegraphics[trim={5cm 6cm 5cm 5cm},clip,width=0.8\linewidth]{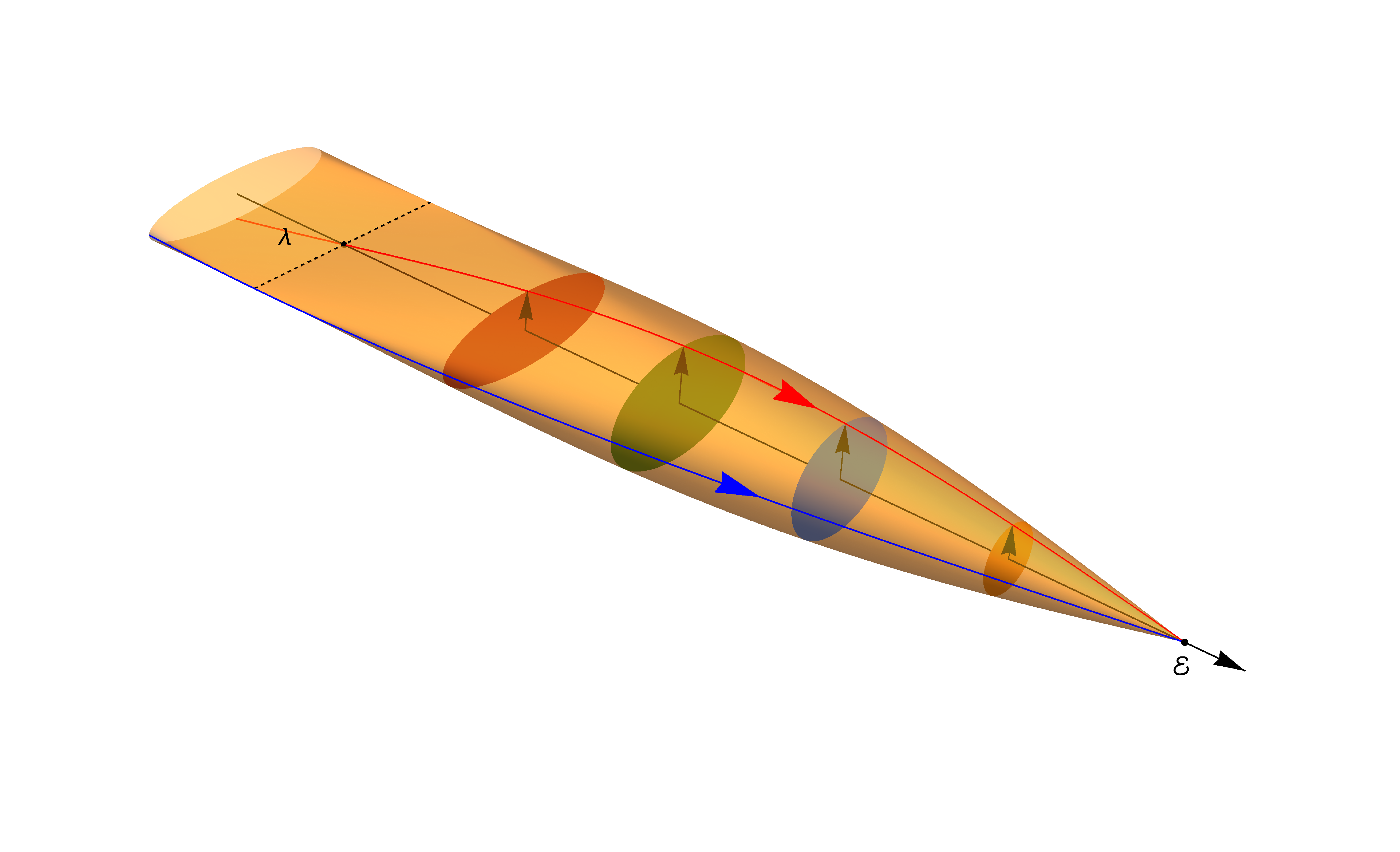}
\caption{}\label{fig:vertexatEb}
\end{subfigure}

\caption{(a) A bundle of rays that forms a vertex at $\calE$. Its shape undergoes a deformation under the spacetime curvature. Along the fiducial geodesic a shape of cross section that is circular at $\calE$ changes its size and becomes increasingly elliptical. (b) The same bundle of rays that forms a vertex at $\calE$. At a larger distance it may eventually degenerate either to a line or to a point. The point on the geodesic where this degeneracy happens is called the conjugate point.}
\label{fig:vertexatE}
\end{figure}

This is an example of a singular point of an inifinitesimal ray bundle. In the next section we will define this notion more precisely.

By analogy we may also consider an infinitesimal ray bundle with a vertex at $\calO$, related to the bitensor $\WXL$ instead of $\tWXL$.

\subsection{Singular points of a bundle}

The description of a ray bundle using the shear and expansion can break down at isolated points even though the perturbed geodesics constituting the bundle are perfectly regular there. This typically  happens when the  bundle collapses along one or two directions, forming a self-intersection. In the language of the Sachs frame this happens if the determinant of the two transverse solutions of the GDE vanishes, i.e. $\det (\xi^{\bm A}_1, \xi^{\bm B}_2) = 0$. The expansion diverges in this case to $\pm \infty$ and changes sign.

We introduce the following definition:
\begin{ddefinition} Point $\cal P$ along a null geodesic $\gamma_0$ is a \emph{ regular point of a ray bundle} iff $\theta(\lambda)$, $\omega_{\bm A\bm B}(\lambda)$ and $\sigma_{\bm A \bm B}(\lambda)$  are smooth  at $\cal P$. A point that is not regular will be called \emph{singular}. \end{ddefinition}

We fix the null geodesic $\gamma_0$ and the observer's position $\calO$ along $\gamma_0$. We can now introduce two types of singular points along a  null geodesic, defined by the properties of the vertex and initially parallel ray bundles. 

We call $\cal P$ a \emph{conjugate point} with respect to $\calO$ iff the vertex bundle from $\calO$ refocuses back at
$\cal P$ at least along one transverse direction. This property is equivalent  to the existence of a  Jacobi field along $\gamma_0$, satisfying the GDE and vanishing at $\cal O$ and $\cal P$, but not identically zero \cite{perlick2004r}. It is easy to check that this happens iff $\det {\WXL}\UD{\bf A}{\bf B} = 0$ between $\calO$ and $\calP$. Conjugate points correspond to the intersection of the fiducial geodesic with a caustic and are points of infinite magnification
of images of objects located at $\cal P$ as seen in $\calO$. 

We call $\cal P$ a \emph{focal point} iff an infinitesimal bundle of rays running parallel at $\cal O$ refocuses at $\cal P$ along at least one direction.
This happens when $\det  {\WXX}\UD{\bf A}{\bf B} = 0$, see Fig \ref{fig:initiallyparallel}. 

The Reader may check that at both the focal point and the conjugate point the expansion $\theta$ of the corresponding bundle has a singularity.

\subsection{Integral formula for the cross-sectional area}

The ODE (\ref{eq:Aevolution}) can be integrated exactly assuming that $\theta(\lambda)$ is a regular function. Namely, if there are no singular points between $\calO$ and $\lambda$, 
the solution simply reads
\bea
 \calA(\lambda) = \calA_{\calO} \exp\left(\int_{\lambda_{\cal O}}^\lambda \theta(\lambda')\,\dd\lambda'\right) \label{eq:Asolution}.
\eea
This formula will play an important role in the proof of the main result. Let us stress that the condition of regularity of $\theta(\lambda)$ is crucial here, because Eq. (\ref{eq:Asolution}) may break down after  a singular point such as  a focal point. This is evident if we note that $\calA(\lambda)$ may switch sign  past a focal point, which is obviously inconsistent with Eq. (\ref{eq:Asolution}), in which the signs of $\calA_\calO$ and $\calA(\lambda)$ must be the same.

\section{The main theorems}

\begin{definition}
We say that the  the \emph{null energy condition} (NEC) holds at a set $O \subset M$ iff at every point in $O$ we have $R_{\mu\nu}\,l^\mu\,l^\nu \ge 0$ for all null vectors $l^\mu$.
\end{definition}

\begin{theorem} \label{thm:mu}
Let $\calO$ and $\calE$ be two points along a null geodesic $\gamma_0$ such that $\calO$ lies in the causal future of $\calE$ and let the NEC hold  along $\gamma_0$ between $\calO$ and $\calE$. Assume also that between $\calO$ and $\calE$  there are no singular points of the infinitesimal bundle of rays parallel at $\calO$. Then we have
\begin{equation}
 \mu \ge 0
\end{equation}
for an observer at $\calO$ and a source at $\calE$.
Moreover, $\mu=0$ iff the transverse components of the optical tidal tensor $R\UD{\bm A}{\bm \mu \bm \nu \bm B}\,l^{\bm \mu}\,l^{\bm \nu}$ vanish along $\gamma_0$ between $\calO$ and $\calE$.
\end{theorem}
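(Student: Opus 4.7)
The plan is to reduce the claimed distance inequality to an integral inequality for the expansion of an auxiliary null congruence, exploiting Eq.~(\ref{eq:muar}). By Eqs.~(\ref{eq:muWXX}) and~(\ref{eq:AfromWXX}), the distance slip admits the representation $\mu = 1 - \calA(\lambda_\calE)/\calA_\calO$, where $\calA(\lambda)$ is the cross-sectional area of the infinitesimal bundle running parallel at $\calO$ and $\calA_\calO > 0$ by convention~(\ref{eq:Ainitialdata}). By hypothesis this bundle has no singular points between $\calO$ and $\calE$, so the expansion $\theta(\lambda)$ is regular throughout and the integral representation~(\ref{eq:Asolution}) applies; it therefore suffices to show that $\int_{\lambda_\calO}^{\lambda_\calE}\theta(\lambda')\,\dd\lambda' \le 0$.

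For the inequality, the key observation is that the parallel bundle has initial data $\theta(\lambda_\calO) = 0$, $\sigma_{\mba\mbb}(\lambda_\calO) = 0$, $\omega_{\mba\mbb}(\lambda_\calO) = 0$, and since the twist evolution equation is homogeneous the bundle remains twist-free along $\gamma_0$, so the Sachs equations~(\ref{eq:raye}) and~(\ref{eq:rays}) apply. Under the NEC one has $R_{ll} \ge 0$, and together with $\theta^2 \ge 0$ and $\sigma_{\mba\mbb}\sigma^{\mba\mbb} \ge 0$ (the screen-space metric being positive definite) this forces $\dot\theta \le 0$ throughout $[\lambda_\calO, \lambda_\calE]$. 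Starting from $\theta(\lambda_\calO) = 0$, monotonicity alone gives $\theta(\lambda) \le 0$, and integration yields $\calA(\lambda_\calE) \le \calA_\calO$, i.e.\ $\mu \ge 0$.

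For the equality case, suppose $\mu = 0$. Then $\int_{\lambda_\calO}^{\lambda_\calE}\theta\,\dd\lambda = 0$ with $\theta$ non-positive and continuous, so $\theta \equiv 0$ on the interval. Differentiating and substituting back into Raychaudhuri, I find $\sigma_{\mba\mbb}\sigma^{\mba\mbb} + R_{ll} \equiv 0$; since both terms are non-negative, each must vanish identically. Plugging $\sigma_{\mba\mbb} \equiv 0$ and $\theta \equiv 0$ into~(\ref{eq:rays}) gives $C_{\mba ll \mbb} \equiv 0$, and combining with $R_{ll} \equiv 0$ via the decomposition~(\ref{eq:decomposition}) shows that the transverse optical tidal tensor vanishes along $\gamma_0$ between $\calO$ and $\calE$. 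The converse is immediate: if the transverse tidal tensor vanishes identically, the zero initial data for $\theta$ and $\sigma_{\mba\mbb}$ propagate trivially via~(\ref{eq:raye})--(\ref{eq:rays}), so $\calA$ is constant and $\mu = 0$.

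I do not foresee a serious technical obstacle, since the argument essentially reuses the monotonicity structure of the classical focusing theorem. The one non-trivial input is the identification~(\ref{eq:muar}) between $\mu$ and the area ratio of the parallel bundle, which converts a two-point statement about distance measures into a local ODE question. The main subtlety is that the restriction to the region up to the first focal point is essential: past such a point $\calA(\lambda)$ changes sign and the exponential representation~(\ref{eq:Asolution}) breaks down, so the same monotonicity conclusion cannot be extended globally.
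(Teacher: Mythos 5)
Your proof is correct and follows essentially the same route as the paper's: reduce $\mu$ to the area ratio of the parallel bundle via Eq.~(\ref{eq:muar}), use the Raychaudhuri equation~(\ref{eq:raye}) with zero initial data and the NEC to get $\theta \le 0$, and then analyze the equality case by forcing $\theta \equiv 0$ and hence $\sigma_{\mba\mbb} \equiv 0$, $R_{ll} \equiv 0$ and $C_{\mba ll\mbb} \equiv 0$. The only addition is that you explicitly verify the converse implication of the ``iff,'' which the paper leaves implicit; that is a harmless and correct supplement.
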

\begin{proof}
We begin with the inequality. The right-hand side of Eq. (\ref{eq:raye}) for the derivative  of $\theta$ is obviously non-positive. Since initially $\theta(\lambda_{\calO}) = 0$ we see that  $\theta(\lambda) \le 0$ everywhere
between $\calO$ and $\calE$:  
\bea
\int_{\lambda_\calO}^{\lambda_{\calE}} \theta(\lambda)\,\dd\lambda \le 0.
\eea
From the integral formula (\ref{eq:Asolution}) we see that $\calA(\lambda) \le \calA_{\calO}$, so from \eqref{eq:muar} we have $\mu \ge 0$. This completes the proof of the first part of the theorem.

Assume now $\mu = 0$ between $\calO$ and $\calE$. It follows from \eqref{eq:muar} that $\calA(\lambda) = \calA_{\calO}$. Substituting this to the integral formula (\ref{eq:Asolution}) we obtain
\bea
\exp\left(\int_{\lambda_\calO}^{\lambda_\calE} \theta(\lambda)\,\dd\lambda \right) = 1,
\eea
or equivalently
\bea
\int_{\lambda_\calO}^{\lambda_\calE} \theta(\lambda)\,\dd\lambda =0. \label{eq:int0}
\eea
By the regularity assumption $\theta(\lambda)$ is continuous on the closed interval between $\lambda_\calO$ and $\lambda_\calE$ and we have also shown that $\theta(\lambda) \le 0$ on this interval. Therefore Eq. \eqref{eq:int0} is only possible if
$\theta(\lambda) = 0$ everywhere on this interval. Substituting this condition to (\ref{eq:raye}) we obtain $-\sigma_{\bm A\bm B}\,\sigma^{\bm A\bm B} - R_{ll} = 0$, which implies that both
$\sigma_{\bm A\bm B} = 0$ and $R_{ll} = 0$ everywhere between $\calO$ and $\calE$. Finally, we substitute the former relation to (\ref{eq:rays}) to obtain $C_{\bm A\bm \mu\bm \nu\bm B}\,l^{\bm \mu}\,l^{\bm \nu} = 0$. 

We have thus proved that both the contracted Ricci tensor and the transverse components of the contracted Weyl tensor vanish. It follows that that all transverse components of the optical tidal tensor must vanish between $\calO$ and $\calE$, see \eqref{eq:decomposition}. This completes the proof of the second part of the theorem.
\end{proof}
This theorem does not automatically imply an inequality between the distance measures because of the sign ambiguity in the definition of $\mu$, see the rhs of Eq. (\ref{eq:muD}). We therefore need one more result regarding the sign-defining factor $\sigma$ in \eqref{eq:muD}:
\begin{proposition} \label{pro:sigma}
Under the assumptions of Theorem \ref{thm:mu} we have $\sigma = 1$.
\end{proposition}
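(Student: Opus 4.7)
The plan is to exploit the relation $\sigma = \sgn\det W_{XX}\UD{\mba}{\mbb}$ from Eq.~(\ref{eq:sigmaWXX}) and show that under the hypotheses of Theorem \ref{thm:mu} this determinant must be strictly positive at $\calE$. The argument is essentially a continuity/intermediate-value argument and is almost trivial once the correct objects are identified; the main work has already been done in setting up the framework.

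First I would evaluate $\det W_{XX}\UD{\mba}{\mbb}$ at the observation point. By the initial conditions in Eq.~(\ref{eq:WXXdef}), $W_{XX}\UD{\mba}{\mbb}(\lambda_\calO) = \delta\UD{\mba}{\mbb}$, so
\begin{equation}
\det W_{XX}\UD{\mba}{\mbb}(\lambda_\calO) = 1 > 0.
\end{equation}
Next, I would use the no-singular-point hypothesis. By the definition of a focal point (vanishing of $\det W_{XX}\UD{\mba}{\mbb}$), the assumption that the parallel bundle at $\calO$ has no singular points between $\calO$ and $\calE$ is precisely the statement that $\det W_{XX}\UD{\mba}{\mbb}(\lambda) \neq 0$ for all $\lambda \in [\lambda_\calO, \lambda_\calE]$.

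Then I would invoke continuity. The components of $W_{XX}\UD{\mba}{\mbb}$ solve the linear ODE system (\ref{eq:WXXdef}) with smooth coefficients along $\gamma_0$, so the determinant is a continuous (indeed smooth) real-valued function of $\lambda$. A continuous function that starts at $1$ at $\lambda_\calO$ and never vanishes on $[\lambda_\calO,\lambda_\calE]$ cannot become negative by the intermediate value theorem. Hence $\det W_{XX}\UD{\mba}{\mbb}(\lambda_\calE) > 0$, and therefore $\sigma = \sgn\det W_{XX}\UD{\mba}{\mbb}(\lambda_\calE) = 1$, which completes the proof.

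There is no real obstacle here: the proposition is essentially a sign-tracking lemma ensuring that the NEC-based inequality $\mu \ge 0$ on $\det W_{XX}$ translates cleanly into $D_{par}\ge D_{ang}$ without the determinant flipping sign. The only point worth noting is that one cannot drop the no-focal-point assumption, since crossing a focal point is precisely what allows $\det W_{XX}$ to change sign and hence $\sigma$ to become $-1$; this is the same mechanism by which the global distance inequality can fail past the first focal point, as anticipated in the introduction.
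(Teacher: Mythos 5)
Your proof is correct, but it takes a different (and arguably more elementary) route than the paper. The paper obtains $\det W_{XX}\UD{\mba}{\mbb} > 0$ as an immediate corollary of the machinery already assembled for Theorem \ref{thm:mu}: the integral formula \eqref{eq:Asolution} gives $\calA(\lambda) = \calA_\calO \exp\left(\int_{\lambda_\calO}^{\lambda}\theta\,\dd\lambda'\right) > 0$ since the exponential is manifestly positive and $\calA_\calO>0$ by the convention \eqref{eq:Ainitialdata}, and then \eqref{eq:AfromWXX} converts this into positivity of the determinant. You instead bypass the area and expansion entirely and argue directly on $\det W_{XX}\UD{\mba}{\mbb}$ itself: it equals $1$ at $\calO$ by \eqref{eq:WXXdef}, it cannot vanish on the interval because a zero is precisely a focal point (excluded by the no-singular-point hypothesis, since $\theta$ blows up there), and a continuous nonvanishing function starting at $1$ stays positive by the intermediate value theorem. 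Both arguments hinge on the same regularity assumption and on the same underlying mechanism --- the determinant can only change sign by passing through a focal point --- so neither is stronger than the other; yours has the small virtue of not requiring the Sachs/area formalism at all, while the paper's is a one-line consequence of a formula it needs anyway. Your closing remark about why the no-focal-point hypothesis cannot be dropped is accurate and matches the discussion of the post-focal-point regime in the introduction.
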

\begin{proof}
From (\ref{eq:Asolution}) we see easily that $\calA(\lambda) > 0$. Therefore from (\ref{eq:AfromWXX}) we have $\det {W_{XX}}\UD{\bf A}{\bf B} > 0$. Then $\sigma = 1$ follows from (\ref{eq:sigmaWXX}).
\end{proof}
The following result follows now Theorem \ref{thm:mu}, equation (\ref{eq:muD}) and Proposition \ref{pro:sigma} 
\begin{corollary} \label{thm:D}
Under the assumptions of Theorem \ref{thm:mu} we have
\begin{equation}
 D_{par} \ge D_{ang}
\end{equation}
for any observer at $\cal O$ and any light source at $\cal E$. Moreover, $D_{par} = D_{ang}$ iff the transverse components of the optical tidal tensor $R\UD{\bm A}{\bm \mu \bm \nu \bm B}\,l^{\bm \mu}\,l^{\bm \nu}$ vanish along $\gamma_0$ between $\calO$ and $\calE$.
\end{corollary}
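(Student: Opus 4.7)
The plan is to combine the three preceding results -- Theorem \ref{thm:mu}, Proposition \ref{pro:sigma}, and the defining relation (\ref{eq:muD}) for the distance slip -- in a purely algebraic manner, since the corollary is essentially their immediate consequence. First I would invoke Proposition \ref{pro:sigma} to eliminate the sign ambiguity, fixing $\sigma = 1$ under the hypotheses of Theorem \ref{thm:mu}. With this, the definition (\ref{eq:muD}) reduces to
\[
 \mu = 1 - \frac{D_{ang}^2}{D_{par}^2}.
\]

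Next, I would apply the inequality part of Theorem \ref{thm:mu} to conclude $\mu \ge 0$, which immediately gives $D_{ang}^2 \le D_{par}^2$. Since both distance measures are manifestly nonnegative, being defined as square roots of absolute values of determinants in \eqref{eq:distDang} and \eqref{eq:dists}, the inequality $D_{par} \ge D_{ang}$ follows by taking square roots. A minor point to check is that $D_{par}$ is actually finite and positive on the whole interval in question; this is guaranteed by the hypothesis that no singular points of the parallel bundle lie between $\calO$ and $\calE$, which via \eqref{eq:dists} and the reasoning of Proposition \ref{pro:sigma} ensures $\det\WXX\UD{\mba}{\mbb}>0$ and $\det\WXL\UD{\mba}{\mbb}\ne 0$.

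For the equality statement I would chain two equivalences. The reduced form of (\ref{eq:muD}), together with $\sigma = 1$ and positivity of both distances, shows that $\mu = 0$ is equivalent to $D_{par} = D_{ang}$. On the other hand, the second part of Theorem \ref{thm:mu} states that $\mu = 0$ is equivalent to the vanishing of the transverse components of the optical tidal tensor $R\UD{\mba}{\bm\mu\bm\nu\mbb}\,l^{\bm\mu}\,l^{\bm\nu}$ along $\gamma_0$ between $\calO$ and $\calE$. Concatenating these two biconditionals yields the ``moreover'' clause.

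I do not anticipate any real obstacle in this proof: no differential or integral inequality needs to be revisited, since all the heavy lifting was done in Theorem \ref{thm:mu} via the Raychaudhuri equation and the integral formula (\ref{eq:Asolution}), and in Proposition \ref{pro:sigma} via positivity of the cross-sectional area. The only conceptual subtlety worth flagging is precisely why Proposition \ref{pro:sigma} was stated separately: an inequality on $\mu$ alone does not fix the sign convention in (\ref{eq:muD}), and without the independent verification that $\sigma = 1$ one could not convert $\mu \ge 0$ into a bona fide inequality between the two distances.
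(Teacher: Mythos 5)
Your proposal is correct and follows essentially the same route as the paper's own proof: use Proposition \ref{pro:sigma} to fix $\sigma = 1$, reduce \eqref{eq:muD} to $\mu = 1 - D_{ang}^2/D_{par}^2$, and then translate both the inequality $\mu \ge 0$ and the equality case of Theorem \ref{thm:mu} into the corresponding statements about the distances using their positivity. The extra remark on finiteness of $D_{par}$ is a harmless addition the paper leaves implicit.
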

\begin{proof} From Proposition \ref{pro:sigma}  and \eqref{eq:muD} we have $\mu = 1-\frac{D_{ang}^2}{D_{par}^2}$. Together with the positivity of $D_{ang}$ and $D_{par}$, this implies that
the inequality $\mu \ge 0$ is equivalent to $D_{par} \ge D_{ang}$ and the equality $\mu = 0$ is equivalent to $D_{par} = D_{ang}$. The Corollary follows then trivially from Theorem \ref{thm:mu}.
\end{proof}

\subsection{Trace-based baseline-averaged parallax distance}

In \textcite{rasanen2014}, \textcite{rosquist1988} as well as  \textcite{ellis1971} (republished as \cite{ellis2009rep}), \textcite{jordan1961} (republished as \citep{jordan2013rep}) a different method of baseline averaging for the parallax distance has been proposed. Effectively it differs from $D_{par}$ by using the trace of the parallax matrix instead of the determinant for baseline direction averaging.
We will now prove a result analogous to Theorem \ref{thm:mu} and Corollary \ref{thm:D} for the parallax distance averaged by trace. 
Let $\tilde D_{par}$ denote the trace-based parallax distance:
\bea
\tilde D_{par} = \frac{2}{\left|\Pi\UD{\mba}{\mba}\right|}.
\eea
This can be expressed also as
\bea
\tilde D_{par} = 2\,\left(l_{{\cal O}\,\mu}\,u_{\cal O}^\mu\right)\big|\tilde\theta\big|^{-1}, \label{eq:tildeDparfromtheta}
\eea
where $\tilde \theta$ is the expansion of the vertex bundle emanating from $\cal E$, evaluated at $\cal O$, see Fig. \ref{fig:vertexatEa}. 
If we rescale the parametrization $\lambda$ to fit the observer's frame, ensuring $l_{\calO\,\mu}\,u_\calO^\mu = 1$, this expression simplifies to
\bea
\tilde D_{par} = 2\,\big|\tilde\theta\big|^{-1}.
\eea

We prove now the following theorem:
\begin{theorem} \label{thm:tildeD}
Let $\calO$ and $\calE$ be two points along a null geodesic $\gamma_0$ such that $\calO$ lies in the causal future of $\calE$ and let the NEC hold  along $\gamma_0$ between $\calO$ and $\calE$. Assume also that between $\calO$ and $\calE$  there are no singular points of the infinitesimal bundle of rays with the vertex at $\cal E$, except the point $\cal E$ itself, and that its expansion $\tilde \theta$ does not vanish between $\cal O$ and $\cal E$. Then we have
\begin{equation}
\tilde D_{par} \ge D_{ang}
\end{equation}
for an observer at $\calO$ and a source at $\calE$.
Moreover, $\tilde D_{par} = D_{ang} $ iff the transverse components of the optical tidal tensor $R\UD{\bm A}{\bm \mu \bm \nu \bm B}\,l^{\bm \mu}\,l^{\bm \nu}$ vanish along $\gamma_0$ between $\calO$ and $\calE$.
\end{theorem}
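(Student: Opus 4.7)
The plan is to relate both $\tilde D_{par}$ and $D_{ang}$ to the single vertex bundle emanating from $\calE$, and then package the desired inequality into the monotonicity of one scalar quantity along $\gamma_0$. Rescaling the affine parameter so that $(u^\mu_\calO\,l_{\calO\mu})=1$, formula \eqref{eq:tildeDparfromtheta} gives $\tilde D_{par}=2/|\tilde\theta(\lambda_\calO)|$. For the angular diameter distance, the Jacobi-field reciprocity along $\gamma_0$ (a consequence of the constancy of the symplectic pairing of any two Jacobi fields) yields $\det\WXL|_\calE=\det\tWXL|_\calO$; combined with \eqref{eq:distDang} and the fact that the canonical vertex bundle from $\calE$, with $\dot{\tilde\xi}^{\mba}_I(\lambda_\calE)=\delta^\mba_I$, has cross-sectional area $\tilde\calA(\lambda)=\pi\det\tWXL(\lambda)$, this identifies
\[
 D_{ang}^2=\tilde\calA(\lambda_\calO)/\pi.
\]
The claim $\tilde D_{par}\ge D_{ang}$ therefore reduces to showing $h(\lambda_\calO)\le 4\pi$ for the scalar
\[
 h(\lambda)\;:=\;\tilde\theta(\lambda)^2\,\tilde\calA(\lambda).
\]

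I would then study $h$ on $(\lambda_\calO,\lambda_\calE)$. A direct computation combining \eqref{eq:Aevolution} and \eqref{eq:raye} gives
\[
 h'(\lambda)\;=\;\tilde\theta\,\tilde\calA\,\bigl(2\dot{\tilde\theta}+\tilde\theta^2\bigr)\;=\;-2\,\tilde\theta\,\tilde\calA\,\bigl(\tilde\sigma_{\mba\mbb}\,\tilde\sigma^{\mba\mbb}+R_{ll}\bigr),
\]
in which the bracketed factor is non-negative by NEC and by $\tilde\sigma_{\mba\mbb}\tilde\sigma^{\mba\mbb}\ge 0$. The asymptotic \eqref{eq:thetaasympt} together with the hypothesis $\tilde\theta\ne 0$ between $\calO$ and $\calE$ forces $\tilde\theta<0$ on that interval; and the no-singular-point assumption prevents $\tilde\calA$---positive near $\calE$ by \eqref{eq:WXLasympt} and unable to vanish without making $\tilde\theta=\dot{\tilde\calA}/\tilde\calA$ singular---from acquiring a zero there, so $\tilde\calA>0$ throughout. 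Consequently $\tilde\theta\tilde\calA<0$ and $h'\ge 0$. The asymptotics \eqref{eq:thetaasympt} and \eqref{eq:WXLasympt} give $\lim_{\lambda\nearrow\lambda_\calE}h(\lambda)=4\pi$, so monotonicity yields $h(\lambda_\calO)\le 4\pi$, i.e.\ $\tilde D_{par}\ge D_{ang}$.

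For the equality statement, $h(\lambda_\calO)=4\pi$ together with monotonicity and the limit at $\calE$ forces $h$ to be constant and hence $h'\equiv 0$; since $\tilde\theta\tilde\calA\ne 0$, this forces $\tilde\sigma_{\mba\mbb}\tilde\sigma^{\mba\mbb}=0$ and $R_{ll}=0$ pointwise. Then $\tilde\sigma_{\mba\mbb}\equiv 0$, and substituting into \eqref{eq:rays} gives $C_{\mba ll\mbb}=0$; the decomposition \eqref{eq:decomposition} then yields vanishing of all transverse components of the optical tidal tensor, exactly as in Theorem~\ref{thm:mu}. The only delicate step is the limit at $\calE$: both factors of $h$ degenerate there, so one must appeal to the precise asymptotics \eqref{eq:thetaasympt}, \eqref{eq:WXLasympt} rather than evaluate pointwise. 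Once this limit is in place, the whole proof collapses to a one-line monotonicity argument driven by the same Riccati-type manipulation that underlies the standard focusing theorem and the proof of Theorem~\ref{thm:mu}.
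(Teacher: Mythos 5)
Your proposal is correct and follows essentially the same route as the paper: both reduce the claim to the vertex bundle at $\calE$ via the symplectic transpose relation for the Jacobi map, apply the Raychaudhuri equation to the ratio $D_{ang}^2/\tilde D_{par}^2$ (your $h=\tilde\theta^2\tilde\calA$ is just $4\pi$ times the paper's $f=\tilde\theta^2\,|\det\tWXL|/4$), and fix the normalization by the asymptotics \eqref{eq:thetaasympt}, \eqref{eq:WXLasympt} at the vertex. The only cosmetic difference is that you argue by monotonicity of $h$ directly from the sign of $h'$, whereas the paper integrates the logarithmic derivative of $f$ to an explicit exponential formula; the equality analysis is identical in substance.
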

\begin{proof}
The proof uses the propreties of an infinitesimal bundle with a vertex at $\cal E$ in a similar way as the proof of Theorem \ref{thm:mu} uses the infinitesimal bundle parallel at $\calO$. 
We begin by relating the angular diameter distance to the properties of this bundle.

The angular diameter distance is related to the determinant of the Jacobi matrix at $\calE$, with the vertex positioned at $\calO$, see \eqref{eq:distDang}. However, it can be easily related to the Jacobi map
with the roles of $\calO$ and $\calE$ reversed. 

Let $\tWXL{}\UD{\mba}{\mbb}(\lambda)$ denote the Jacobi map with the vertex at $\calE$, i.e. satisfying
\bea
\frac{d^2}{d\lambda^2} \tWXL {}\UD{\mba}{\mbb} - R\UD{\mba}{ll\mbc}\,\tWXL{}\UD{\mbc}{\mbb} &=& 0 \\
{\tWXL}{} \UD{\mba}{\mbb}(\calE) &=& 0 \\
\frac{d}{d\lambda}\tWXL{}\UD{\mba}{\mbb}(\calE) &=& \delta\UD{\mba}{\mbb}.
\eea
From the symplectic property of the GDE we have a simple relation between the Jacobi matrix from $\calE$ up to $\calO$ and the one calculated the other way round:
\bea
\WXL{}\UD{\mba}{\mbb}(\lambda_\calE) = -\tWXL{}\DU{\mbb}{\mba}(\lambda_\calO),
\eea
i.e. the two matrices are the  transpose of each other, with a sign flip \cite{perlick2004r, uzun2020}.
Therefore we can replace $\det \WXL{}\UD{\mba}{\mbb}$ by $\det \tWXL{}\UD{\mba}{\mbb}$ in \eqref{eq:distDang}:
\bea
D_{ang} = (l_{\calO\,\mu}\,u_{\cal O}^\mu)\,\left|\det  \tWXL{}\UD{\mba}{\mbb}\right|^{-1/2}, \label{eq:DangfromtWXL}
\eea 
where $\tWXL{}\UD{\mba}{\mbb}$ is the Jacobi map from $\calE$ to $\calO$. \footnote{The possibility of replacing the determinant of the Jacobi matrix calculated from $\calO$ to $\calE$  by the one with the endpoints reversed is also the key step in the proof of the Etherington's duality relation between the angular diameter distance and the luminosity distance, see  \cite{uzun2020}.}

$\tWXL{}\UD{\mba}{\mbb}(\lambda)$, on the other hand, can be related by another ODE to the deformation tensor $\tilde B\UD{\mba}{\mbb}$ of the bundle with vertex at $\calE$:
\bea
\frac{d}{d\lambda} \tWXL{}\UD{\mba}{\mbb} = \tilde B\UD{\mba}{\mbc}\,\tWXL{}\UD{\mbc}{\mbb}, \label{eq:dtildeD}
\eea
The bundle  is twist-free, so it decomposes into expansion $\tilde\theta$ and shear $\tilde\sigma_{\mba\mbb}$ according to
$\tilde B \UD{\mba}{\mbb}(\lambda) = \frac{1}{2}\tilde\theta \,\delta\UD{\mba}{\mbb} + \tilde \sigma\UD{\mba}{\mbb}$.  It follows from \eqref{eq:dtildeD} that
\bea
\frac{d}{d\lambda} \det \tWXL{}\UD{\mba}{\mbb} = \tilde\theta\,\left(\det \tWXL{}\UD{\mba}{\mbb}\right). \label{eq:ddettildeD}
\eea

We now define an auxilliary function $f(\lambda)$ in the following way: we fix the emission point $\calE$ and vary the observation point, corresponding to the affine parameter value $\lambda$. We then take the ratio of the distances
squared, measured between $\calE$ and the point $\lambda$:
\bea
f(\lambda) = \frac{D_{ang}^2}{\tilde D_{par}^2}.
\eea
From \eqref{eq:DangfromtWXL} and \eqref{eq:tildeDparfromtheta} we get an expression for $f(\lambda)$ in terms of quantities related to the vertex bundle at $\calE$: 
\bea
 f(\lambda) = \frac{\tilde\theta^2\,\left|\det \tWXL{}\UD{\mba}{\mbb}\right|}{4}. \label{eq:fdef}
\eea
We will now derive an integral formula for $f(\lambda)$. We begin by differentiating Eq. (\ref{eq:fdef}) and applying (\ref{eq:ddettildeD}):
\bea
 \frac{df}{d\lambda} = \frac{1}{4}\left(2\tilde\theta\,\frac{d\tilde\theta}{d\lambda}\,\left|\det \tWXL{}\UD{\mba}{\mbb}\right|+\tilde\theta^3\,\left|\det \tWXL{}\UD{\mba}{\mbb}\right|\right)
\eea
In the first term use the propagation equation \eqref{eq:raye} for $\tilde\theta$  to obtain
\bea
\frac{df}{d\lambda} = \frac{1}{4}\tilde\theta\,\left|\det \tWXL{}\UD{\mba}{\mbb}\right|\left(-\tilde\sigma_{\mba\mbb}\,\tilde\sigma^{\mba\mbb}-R_{ll}\right),
\eea
or, equivalently,
\bea
\frac{df}{d\lambda} = \frac{f}{\tilde\theta}\left(-\tilde\sigma_{\mba\mbb}\,\tilde\sigma^{\mba\mbb}-R_{ll}\right). \label{eq:ODEf}
\eea
This equation can be solved by separation of variables, but we need the initial data. 
Recall that in the bundle with the vertex at $\calE$ we have asymptotic expansions \eqref{eq:WXLasympt} and \eqref{eq:thetaasympt}.
The reader may check using \eqref{eq:fdef} that this implies that $f \to 1$ as $\lambda \to \lambda_\calE$. With this knowledge we may integrate
the ODE \eqref{eq:ODEf} to
\bea
f(\lambda) = \exp\left(-\int_{\lambda_\calE}^\lambda\tilde\theta^{-1}\,\left(\tilde\sigma_{\mba\mbb}\,\tilde\sigma^{\mba\mbb}+R_{ll}\right)\,\dd\lambda'\right).
\eea
The integrand contains the expansion $\tilde\theta$ in the denominator, but the integral is regular at $\calE$ and everywhere along the interval considered because of our assumptions regarding the behaviour of $\tilde\theta$ (i.e. no zeros and a pole at $\calE$). Moreover, 
 we note that $\tilde\theta < 0$ inside the interval we consider, because it is negative near $\lambda_\calE$ due to \eqref{eq:thetaasympt} and from the assumptions we know that it cannot vanish or change sign in the interval we consider.   It also follows  that we can write $\tilde\theta = -\big|\tilde\theta\big|$.  The integration in the formula above proceeds from larger $\lambda_\calE$ down to smaller $\lambda$, so  we swap the integration limits and absorb this way the minus sign from $\tilde\theta$ to obtain:
\bea
f(\lambda) = \exp\left(-\int_{\lambda}^{\lambda_\calE}\big|\tilde\theta\big|^{-1}\,\left(\tilde\sigma_{\mba\mbb}\,\tilde\sigma^{\mba\mbb} +R_{ll}\right)\,\dd\lambda'\right).
\eea

Now, the last steps of the proof proceed just like in the proof of Theorem \ref{thm:mu}: the integrand is manifestly non-negative if NEC holds. Therefore we have $f \le 1$, with the equality happening only if and only if 
both $R_{ll}$ and $\tilde \sigma_{\mba\mbb}$ vanish between $\calE$ and $\lambda$. The vanishing of the shear tensor $\tilde \sigma_{\mba\mbb}$ implies that the transverse components of the Weyl part of the optical tidal matrix $C_{\mba ll \mbb}$ must also vanish  because of \eqref{eq:rays}.

\end{proof}

\subsection{Single baseline parallax distance}
\label{sec:sgbase}

While it may not be obvious from the proofs we presented above, the baseline averaging of the trigonometric parallax effect is necessary for the distance inequality to work. Obviously it is possible to define a type of parallax distance defined through measurements with a single baseline. Let the transverse unit vector   $n^\mba$ denote the baseline direction, with $n^\mba\,n_\mba = 1$. As the simplest example we consider here
\bea
D_{par}^n = \left|\Pi_{\mba\mbb}\,n^\mba\,n^\mbb\right|^{-1},
\eea
definition considered for example in \cite{kasai1988}. $D_{par}^n$ coincides with the baseline-averaged parallax distances in flat spacetimes, in which $\Pi\UD{\mba}{\mbb} = D^{-1}\,\delta\UD{\mba}{\mbb}$, $D$ denoting the distance in the observer's frame. However, it turns out $D_{par}^n$ does not have to obey the inequality $D_{ang} \le D_{par}^n$ even if the NEC is satisfied. As an example consider the situation when we have a vacuum solution ($R_{ll}=0$), but non-vanishing Weyl tensor $C_{{\mba}ll{\mbb}}$  causing shear of null geodesics along $\gamma_0$. Obviously the NEC holds in a vacuum spacetime. The most general Taylor expansions for ${W_{XX}}\UD{\mba}{\mbb}$ and ${W_{XL}}\UD{\mba}{\mbb}$ around $\lambda_\calO$ read \cite{serbenta2021}:
\bea
{W_{XX}}\UD{\mba}{\mbb} &=& \delta\UD{\mba}{\mbb} + \frac{(\lambda-\lambda_\calO)^2}{2}\,R\UD{\mba}{ll\mbb} + O\left((\lambda-\lambda_\calO)^3\right) \\
{W_{XL}}\UD{\mba}{\mbb} &=&(\lambda-\lambda_\calO) \,\delta\UD{\mba}{\mbb} + \frac{(\lambda-\lambda_\calO)^3}{6}\,R\UD{\mba}{ll\mbb} + O\left((\lambda-\lambda_\calO)^4\right),
\eea
with $R\UD{\mba}{ll\mbb}$ evaluated at $\calO$.
It follows that the parallax matrix has the Taylor expansion
\bea 
\Pi_{\mba\mbb} = (l_{\calO\,\mu}\,u_{\calO}^\mu)^{-1}\,(\lambda-\lambda_\calO)^{-1}\left(\delta_{\mba\mbb} + \frac{(\lambda-\lambda_\calO)^2}{3}R_{\mba ll \mbb}\right) + O\left((\lambda-\lambda_\calO)^2\right),
\eea
while the expansion for $D_{par}^n$ reads
\bea
D_{par}^n = (l_{\calO\,\mu}\,u_{\calO}^\mu)\,(\lambda-\lambda_\calO)\,\left(1 - \frac{(\lambda-\lambda_\calO)^2}{3}\,R_{\mba ll \mbb}\,n^\mba\,n^\mbb\right) + O\left((\lambda-\lambda_\calO)^4\right).
\eea
We compare the latter expression with the Taylor series for $D_{ang}$:
\bea
D_{ang} =  (l_{\calO\,\mu}\,u_{\calO}^\mu)\,(\lambda-\lambda_\calO)\,\left(1 - \frac{(\lambda-\lambda_\calO)^2}{12}\,R_{ll}\right) + O\left((\lambda-\lambda_\calO)^4\right).
\eea
Assuming vacuum we have $R_{ll} = 0$ and $R_{\mba ll \mbb} = C_{\mba ll \mbb}$ and hence 
\bea 
D_{par}^n=D_{ang}\,\left( 1 - \frac{(\lambda-\lambda_\calO)^2}{3}\,C_{\mba ll \mbb}\,n^\mba\,n^\mbb\right) + O\left((\lambda-\lambda_\calO)^3\right).
\eea
Obviously  we have $D_{par}^n < D_{ang}$  near $\calO$ provided that  $C_{\mba ll \mbb}\,n^\mba\,n^\mbb > 0$, i.e. $n^\mba$ is chosen such that the images undergo stretching by the tidal forces in its direction.

\section{Conclusions}

The two main results of this paper, i.e. Theorems \ref{thm:D} and \ref{thm:tildeD}, mean that any observation of a systematic difference between the angular diameter distance and the baseline-averaged parallax distance with $D_{ang} > D_{par}$ would be difficult to reconcile with general relativity and the theory of light propagation as we understand them today. In particular,  the effects of  shear of the ray bundle due to tidal fields along the line of sight could not explain away a result of this kind. One would have to give up either the null energy condition or the assumption that light travels along null geodesics. Therefore, systematic comparison of both distance measurements may be considered an experimental test of the null energy condition, assuming that the general relativity and the geometric optics approximation for light propagation hold. Since $D_{ang}$ is related to $D_{lum}$ and the redshift through the Etherington's reciprocity relation, it is in principle possible to perform the measurement of distance slip $\mu$ using standard candles for which we have also precise redshift and trigonometric parallax measurements.

We note, however, that the measurements of the difference between the two distance measures seem impossible today, because the annual parallax effects are too small over the distances in which we can measure the trigonometric parallax.  We can provide an order-of-magnitude estimate of the distance slip  from the integral formula \eqref{eq:relation}: for the mass density comparable with the mass density in the thin disc of the Milky Way $\rho = 1 \,{\rm M}_{\odot}\,{\rm pc}^{-3}$ \cite{mcmillan2011}, negligible pressure terms in $T_{\mu\nu}$ and the distance of 20 kpc, comparable to the largest distance measured by trigonometric parallax \cite{sanna2017}, we get $\mu \approx 2 \cdot 10^{-4}$. A successful measurement of $\mu$ for a single source would then require the determination of both distance measures and the redshift with  relative error not greater than $10^{-4}$, way below current limitations \cite{reid2014, paragi2014, rioja2020}.  This problem could possibly be overcome with a sufficiently large sample of sources. However, it would still require very precise standartization of the standard candles, good control of all possible sources of systematics as well as very precise redshift measurements. 

Another type of parallax measurement has been proposed by Kardashev \cite{kardashev}.  The measurement would use  the displacement due to the motion of the Solar System with respect to the CMB 
rest frame as the baseline. The baseline grows in this case linearly in time and the signal is measured as secular variations of angular separations between distant sources.  Longer baseline should in principle allow for parallax measurements on cosmological distances, although the signal seems still too low for modern instruments. Moreover, the foreground signal due to the Galactic aberration drift needs to be removed first \cite{Marcori:2018cw}.  The idea of cosmic parallax was also developed by many other authors \cite{mccrea1935, kardashev1973, novikov1977, novikov1978, rosquist1988, kasai1988, rasanen2014, Quercellini:2010zr, Marcori:2018cw, korzynski2020}.
Interestingly, the  estimates for the distance slip on cosmological distances in the standard $\Lambda$CDM model (satisfying the NEC)
yield fairly large values, reaching for example $\mu = 0.2$ near $z=1$ \cite{korzynski2020}.  Measurements  of $\mu$ on such distances and determination its sign could test for the NEC violation by dark energy, an obvious sign of physics beyond the  $\Lambda$CDM  model.

We also point out three caveats regarding the distance inequality. The first two are related to the limitations of the mathematical approach. In the proofs we have used the first order geodesic deviation equation around a null geodesic. This means that we assume that the linear, curvature term in the geodesic deviation equation describes very well the behaviour of all relevant light rays. This may fail, for example, if light passes through a region of very quickly varying gravitational potential accross the null ray bundle considered (physical width of around 1 AU). In particular, it may fail in case of a microlensing event, i.e., a small massive body passing through the line of sight. It can also fail if light rays undergo significant nongravitational bending, for example due to the presence of ionized medium of variable density along the line of sight, or if the geometric optics approximation is not applicable.

Second, the inequality works only up to the first focal point, whose position in a given direction is not known beforehand. However, focal points between the Earth and galactic sources should be very rare, confined to rather special, fairly strongly lensed rays. Assuming that the line of sight is filled uniformly with mass density of 100 ${\rm M}_{\odot}\,{\rm pc}^{-3}$, scale of the density inside the bulge  of the Milky Way, and ignoring the Weyl tensor contribution, we may predict the first focal point to appear at around 140 kpc. For uniform mass density comparable with the thin disc (1 ${\rm M}_{\odot}\,{\rm pc}^{-3}$) the distance to the focal point grows to over 1 Mpc, while the density of the dark matter halo of $10^{-2} \,{\rm M}_{\odot}\,{\rm pc}^{-3}$ yields 14 Mpc, the mass density estimates taken  again from \cite{mcmillan2011}. The assumption of uniform mass density along the light of sight makes these estimates very conservative. We conclude that we should not expect the formation of such points anywhere around galactic distances.
 Moreover, since ${W_{XX}}\UD{\mba}{\mbb}$ usually changes sign of the determinant at the focal point, we may expect the parallax matrix $\Pi\UD{\mba}{\mbb}$ past the focal point to deviate far from proportionality to the unit matrix. This in turn may lead to unusual dependence of the two-dimensional parallax angles on the Earth's position, an effect that is in principle observable. Therefore sources past focal point could in principle be detected and removed from the data. 

Thirdly, as we have seen in Sec. \ref{sec:sgbase}, if the parallax is measured only along one baseline then the test is prone to errors in the presence of strong shear induced by the Weyl tensor. Therefore, if shear is not negligible then the baseline averaging step is crucial: we must be able to measure the parallax in two orthogonal directions for the measurement to yield a reliable NEC test.  In case of the annual parallax due to Earth's motion this requirement excludes sources too close to the ecliptic. The problem is obviously even more serious for the cosmic parallax, in which only one baseline is available for all sources. Tests on cosmological distances require therefore prior estimation of the Weyl tensor contribution to the parallax matrix over large distances. Note, however, that
even in the presence of moderate shear this problem might in principle be overcome if the single-baseline parallax and angular diameter distance have been measured for a sufficiently large sample of sources at different positions on the sky. Since the Weyl tensor along the line of sight depends on the position of the source, it  varies relatively quickly accross the sky and it is uncorrelated with the fixed  baseline direction. We may expect that its impact will average to 0, at least in the linear order given a large sample of sources. What is left from the averaging is thus the bare, baseline-independent effect of the Ricci tensor. Quantification of the impact of  shear on the NEC test, its error budget and the question of feasibility are beyond the scope of this paper.
We stress here that unlike the full baseline-averaged measurement, this type of single-baseline measurement relies on additional assumptions about the metric tensor, i.e. either the vanishing of the Weyl contribution to the parallax matrix or the random, uncorrelated nature of the Weyl tensor over the whole sky.

\section*{Acknowledgments}
The work was supported by the National Science Centre, Poland (NCN) via the SONATA BIS programme, Grant No. 2016/22/E/ST9/00578 for the project \emph{``Local  relativistic  perturbative  framework  in  hydrodynamics  and  general relativity  and  its  application  to  cosmology"}.

\bibliography{main}

\end{document}